\documentclass[lettersize,onecolumn]{IEEEtran}
\usepackage{mathrsfs}
\usepackage{amsmath,amsfonts}
\usepackage{amssymb,amscd,amsthm}
\usepackage{algorithmic}
\usepackage{algorithm}
\usepackage{array}
\usepackage[caption=false,font=normalsize,labelfont=sf,textfont=sf]{subfig}
\usepackage{textcomp}
\usepackage{stfloats}
\usepackage{url}
\usepackage{verbatim}
\usepackage{graphicx}
\usepackage{cite}
\hyphenation{op-tical net-works semi-conduc-tor IEEE-Xplore}
\newtheorem{lemma}{\qquad\textbf{Lemma}}
\newtheorem{theorem}{\qquad\textbf{Theorem}}
\newtheorem{definition}{\qquad\textbf{Definition}}
\newtheorem{remark}{\qquad\textbf{Remark}}
\newtheorem{corollary}{\qquad\textbf{Corollary}}
\newtheorem{example}{\qquad\textbf{Example}}
\newtheorem{proposition}{\qquad\textbf{Proposition}}


\begin{document}

\title{\bf Column Twisted Reed-Solomon Codes as MDS Codes}

\author{Wei Liu,~Jinquan Luo,~Puyin Wang,~and Dengxin Zhai
\thanks{The authors are with School of Mathematics and Statistics \& Hubei Key Laboratory of Mathematical Sciences, Central China Normal University, Wuhan China 430079.  Dengxin Zhai is also with the School of Mathematics and Statistics, Kashi University, Kashi 844000, China.  The research  is supported by National Natural Science Foundation of China (Nos.12441102, 12171191, 12271199), Natural Science Foundation of Xinjiang Uygur Autonomous Region (2022D01B128),  SRMC Fund (No.2024SRMC01) and the Fundamental Research Funds for the Central Universities (No.CCNU25JCPT031).}
\thanks{E-mail: 1450820784@qq.com(W.Liu),  luojinquan@mail.ccnu.edu.cn(J.Luo), 453798449@qq.com(P.Wang), dxzhai2022@126.com(D.Zhai)}}

\markboth{Journal of \LaTeX\ Class Files,~Vol.~1, No.~2, December~2023}%
{Shell \MakeLowercase{\textit{et al.}}: A Sample Article Using IEEEtran.cls for IEEE Journals}

\IEEEpubid{0000--0000~\copyright~2023 IEEE}

\maketitle

\begin{abstract}
In this paper, we study column twisted Reed-Solomon(TRS) codes. We establish some sufficient conditions for these codes to be MDS  and show that the dimension of their Schur square codes is $2k$. Consequently, these TRS codes are shown to be not equivalent to Reed-Solomon(RS) codes. Moreover, our construction offers more flexible parameters than existing twisted generalized Reed-Solomon(TGRS) code designs. For a large odd prime power $q$, systematically constructed TGRS codes are known to be limited to length  $\frac{q+1}{2}$.  By contrast, our column TRS construction supports code lengths up to $\frac{q+3}{2}$. Finally, we present the dual codes of column TRS codes. Overall, this work introduces a new method for constructing MDS codes by appending column vectors to some generator matrix of an RS code.
\end{abstract}

\begin{IEEEkeywords}
Reed-Solomon codes; Column Twisted Reed-Solomon codes; MDS codes; Schur product; dual codes
\end{IEEEkeywords}

\section{Introduction}
\IEEEPARstart{L}{et} $\mathbb{F}_q$ be a finite field of size $q$, where $q$ is a prime power and $\mathbb{F}_q^* = \mathbb{F}_q \setminus \{0\}$. An $[n,k,d]$ linear code $C$ is a subspace of the linear space $\mathbb{F}_q^n$ over $\mathbb{F}_q$ with dimension $k$ and minimum Hamming distance $d$. It is well known that the parameters of $C$ satisfy $d \leq n - k + 1$. If $d = n - k + 1$, $C$ is called \emph{maximum distance separable}(MDS). Many researchers have studied MDS codes(\!\!\cite{CN2018}, \cite{GA2011}), their covering radius(\!\!\cite{BO2015}), weight distribution(\!\!\cite{ET2011}),   LCD property(\!\!\cite{CE2018}), self-dual property(\!\!\cite{GN2008}, \cite{LC2019}), classification(\!\!\cite{KO2015}), and related cryptographic topics(\!\!\cite{CO2008}). \emph{Generalized Reed-Solomon}(GRS) codes are an important class of MDS linear codes, which can correct burst errors and have many applications in different occasions, such as  providing  high fidelity in CD player. Furthermore, many researchers have investigated non-GRS MDS codes. For instance, certain arcs in finite geometry correspond to MDS codes that are non-equivalent to GRS codes. Let $\mathcal{A}$ be the set of points obtained from the columns of a $k \times n$ generator matrix of a linear MDS code. If the code is a Reed-Solomon code then $\mathcal{A}$ is a rational normal curve. The MDS property implies that any $k$ points of $\mathcal{A}$ span the whole space. A set of points of $\mathrm{PG}(k-1,q)$ with this property is called an arc $\mathcal{A}$. For $k = 3$ and $q$ even, many such examples are known, for instance,
$\mathcal{A} = \{(1,x,x^{\sigma}) \mid x \in \mathbb{F}_q\} \cup \{(0,1,0), (0,0,1)\}$, for certain automorphisms $\sigma$ of $\mathbb{F}_q$(\!\!\cite{CA2003}, \cite{PI1995}). In \cite{ZN2025}, Zhi et al. presented some non-GRS MDS and NMDS codes.

From now on T is referred to as twisted, G as generalized, and RS as Reed-Solomon (code). \emph{Twisted Reed-Solomon}(TRS) codes, which are generalizations of RS codes, were firstly introduced in \cite{BT2017}. Unlike RS codes, TRS codes may not be MDS codes. Therefore, many researchers are interested in the conditions under which TRS codes are MDS codes, non-equivalent to MDS codes. Generally, there are two types of
TGRS codes: single-twist and multi-twists. A TGRS code with single-twist is obtained by adding one monomial to each code polynomial, and the one with multi-twists is obtained by adding $\ell$ monomials to each code polynomial ($\ell \geq 2$).

For TGRS codes with single-twist, based on $(t, h) = (1, 0)$,   Beelen et al. gave  the necessary and sufficient conditions for TGRS codes to be MDS codes, and concluded that their code length satisfies $n \leq \frac{q + 1}{2}$(\!\cite{BT2017}). Zhang et al. investigated the case of TGRS codes when $(t, h) = (q - k - 1, k - 1)$ and provided the necessary and sufficient conditions for them to be  self-dual (\!\!\cite{ZA2022}). Subsequently, Huang et al. and Sui et al. characterized the necessary and sufficient
condition that a TGRS code is MDS for any position pair $(t,h)$ in \cite{HM2023},\cite{HM2021},\cite{SJM2022}. For the condition under which a TGRS code with multi-twists is MDS, Sui et al. characterized the
 necessary and sufficient conditions for a TGRS code
constructed by adding two polynomials both of which contain
no more than two monomials to each code polynomial(\!\!\cite{SM2022}). Gu et al. provided the necessary and sufficient condition for $l$-twist TGRS codes constructed by adding $l$ special monomials to each code polynomial(\!\!\cite{GO2024}). The construction in \cite{SM2024} is similar to the approach in \cite{GO2024}, differing in that \cite{GO2024} adds $l$ arbitrary monomials to each code polynomial. In the aforementioned studies, the components of hooks in TGRS codes were all unequal. Subsequently in \cite{CA2024}, Zhu and Liao established the necessary and sufficient conditions for TGRS codes to be MDS  when $\boldsymbol{t} = (1, 2)$ and $\boldsymbol{h} = (k-2, k-2)$.  TGRS code for arbitrary position pairs of $(t,h)$  is called an arbitrary twists Reed-Solomon(A-TGRS) code. Building upon this foundation, Zhao et al. established the necessary and sufficient conditions for A-TGRS codes to be MDS in more general case, which essentially encompasses all previously studied special cases(\!\!\cite{ZR2025}). In summary, their studies exclusively focused on row transformations of the generator matrix of GRS codes. Consequently, all these codes could only achieve code length $n \leq \frac{q+1}{2}$. This limitation motivated our investigation into column transformation of  GRS generator matrix. Unlike previous approaches,  these non Reed-Solomon type MDS codes in our construction can attain a maximum code length of $\frac{q+3}{2}$.

There are many other works besides the MDS conditions of TGRS codes, such as self-dual codes in \cite{YN2025},\cite{GD2023}, \cite{GO2024},\cite{ZA2022}, small hulls in \cite{SM2024},\cite{WT2021},  parity-check matrices in \cite{CO2024}, deep holes in \cite{FD2024},\cite{LC2025}, cryptanalysis in \cite{BS2018},\cite{LC2020} and decoding algorithms in \cite{SD2025}. There are also some discussions on (+)-twist in \cite{CT2024}.

This paper is organized as follows. In Section \uppercase\expandafter{\romannumeral2}, we show some basic notations and results about TGRS codes. In Section \uppercase\expandafter{\romannumeral3}, we characterize the necessary and sufficient conditions for column Twisted Reed-Solomon(column TRS) codes to satisfy  MDS property, while demonstrating their non-equivalence to conventional (extended) GRS codes. In Section \uppercase\expandafter{\romannumeral4}, we determine parity-check matrix of column TGRS codes, thereby obtaining their dual codes. In Section \uppercase\expandafter{\romannumeral5}, we conclude our work.

\section{Preliminaries}

Throughout this paper, $\mathbb{F}_{q}$ denotes a finite field of order $q$, where $q$ is a prime power and $\mathbb{F}_{q}^{*}=\mathbb{F}_{q}\setminus\{0\}$. Let $\mathbb{F}_{q}[x]$ be the polynomial ring over $\mathbb{F}_{q}$. Given a vector
\[\boldsymbol{a}=\left( {{a_1},{a_2}, \ldots ,{a_n}} \right) \in \mathbb{F}_q^n,\]
where $a_1,a_2,\ldots,a_n$ are distinct elements of $\mathbb{F}_q$.  Usually, $a_1,a_2,\ldots,a_n$ are called $\mathit{evaluation\ points}$. Moreover, given another vector
\[
\boldsymbol{v}=(v_1,v_2,\ldots,v_n)\in(\mathbb{F}_q^*)^n,
\]
a GRS code is defined as
\[
\begin{split}
\mathcal{GRS}_{k,n}(\boldsymbol{a},\boldsymbol{v}) = \left\{
\left(v_1 f(a_1), \ldots, v_n f(a_n)\right) \mid f(x) \in \mathbb{F}_q[x],\ \deg f(x) \leq k-1
\right\}.
\end{split}
\]
An extended GRS code is defined as
\[
\begin{split}
\mathcal{GRS}_{k,n}(\boldsymbol{a},\boldsymbol{v},\infty)
&= \left\{ \left(v_1 f(a_1), \ldots, v_n f(a_n), f_{k-1}\right) \mid f(x) \in \mathbb{F}_q[x],\ \deg f(x) \leq k-1 \right\}
\end{split},
\]
where $f_{k-1}$ is the coefficient of $x^{k-1}$ in $f(x)$.
If $\boldsymbol{v} = \mathbf{1}$, then $\mathcal{GRS}_{k,n}(\boldsymbol{a},\mathbf{1})$ and
$\mathcal{GRS}_{k,n}(\boldsymbol{a},\mathbf{1},\infty)$ are the RS code and the extended RS code, respectively.  Linear space  $\mathcal{V}_{k,\boldsymbol{t},\boldsymbol{h},\boldsymbol{\eta}}$ consisting of twisted polynomials is given in the following.

\begin{definition}\cite{BT2017}
For two positive integers $l,k$ with $l \leq k \leq n \leq q$, suppose that:
\begin{itemize}
    \item $\boldsymbol{h} = (h_1, h_2, \ldots, h_l)$, where $0 \leq h_i \leq k - 1$ are distinct,
    \item $\boldsymbol{t} = (t_1, t_2, \ldots, t_l)$, where $0 \leq t_i < n - k$ are also distinct,
    \item $\boldsymbol{\eta} = (\eta_1, \eta_2, \ldots, \eta_l) \in \mathbb{F}_q^l$.
\end{itemize}
Then
\[
\begin{split}
\mathcal{V}_{k,\boldsymbol{t},\boldsymbol{h},\boldsymbol{\eta}}
&= \left\{ f(x) = \sum_{i=0}^{k-1} f_{i}x^{i} + \sum_{j=1}^{l}\eta_{j}f_{h_{j}}x^{k+t_{j}} \middle|\; f_i\in\mathbb{F}_q \ (i = 0,\ldots,k-1)  \right\}
\end{split}
\]
is a $k$-dimensional $\mathbb{F}_q$-linear subspace of $\mathbb{F}_q[x]$. Here $\boldsymbol{h}$ and $\boldsymbol{t}$ are called the $\mathit{hooks}$ and the $\mathit{twists}$, respectively.
\end{definition}
Then a TGRS code is defined as
\[ \small
\mathcal{C}_{k,n,\boldsymbol{t},\boldsymbol{h}}(\boldsymbol{a},\boldsymbol{v},\boldsymbol{\eta}) = \left\{ (v_1f(a_1),\ldots,v_nf(a_n)) \mid f(x)\in\mathcal{V}_{k,\boldsymbol{t},\boldsymbol{h},\boldsymbol{\eta}} \right\}.
\]
For $l > 0$, an extended TGRS code is defined as
\[
\mathcal{C}_{k,n,\boldsymbol{t},\boldsymbol{h}}(\boldsymbol{a},\boldsymbol{v},\boldsymbol{\eta},\infty) =
\left\{ (v_1f(a_1),\ldots,v_nf(a_n),f_l) \mid f(x)\in\mathcal{V}_{k,\boldsymbol{t},\boldsymbol{h},\boldsymbol{\eta}} \right\},
\]
where $f_l$ is the coefficient of $x^l$ in $f(x)$.
In particular, $\mathcal{C}_{k,n,\boldsymbol{t},\boldsymbol{h}}(\boldsymbol{a},\mathbf{1},\boldsymbol{\eta})$ and $\mathcal{C}_{k,n,\boldsymbol{t},\boldsymbol{h}}(\boldsymbol{a},\mathbf{1},\boldsymbol{\eta},\infty)$ are called  TRS code and extended TRS code respectively. The Schur product of linear codes is defined as follows.
\begin{definition}\label{schur product}
Let $\boldsymbol{x}=(x_1,x_2,\cdots ,x_n)$, $\boldsymbol{y}=(y_1,y_2,\cdots ,y_n) \in {\left( {\mathbb{F}_q^ * } \right)^n}$, the \textit{Schur product} of $\boldsymbol{x}$ and $\boldsymbol{y}$ is defined as $\boldsymbol{x}*\boldsymbol{y}:=\left( {{x_1}{y_1}, \ldots ,{x_n}{y_n}} \right)$. The Schur product of two linear codes $\mathcal{C}_1,\mathcal{C}_2 \subseteq \mathbb{F}_q^n$ is defined as
\[\mathcal{C}_1*\mathcal{C}_2: = {\left\langle {{\textit{\textbf{c}}_1}*{\textit{\textbf{c}}_2}:{\textit{\textbf{c}}_1} \in \mathcal{C}_1,{\textit{\textbf{c}}_2} \in \mathcal{C}_2} \right\rangle _{{\mathbb{F}_q}}},\]
where ${\left\langle S \right\rangle _{{\mathbb{F}_q}}}$ denotes the $\mathbb{F}_q$-linear subspace generated by the subset $S$ of $\mathbb{F}_q^n$.
\end{definition}
In particular, if $\mathcal{C}_1=\mathcal{C}_2=\mathcal{C}$, we call $\mathcal{C}^2:=\mathcal{C}*\mathcal{C}$ the Schur square code of $\mathcal{C}$. The definition of the equivalence for linear codes is given in the following.
\begin{definition}\cite{BT2017}
  Let $\mathcal{C}$,~$\mathcal{D}$ be $[n,k]$ linear codes over $\mathbb{F}_q$. We say that $\mathcal{C}$ and $\mathcal{D}$ are \textit{equivalent} if there is a permutation $\pi \in S_n$ and $\boldsymbol{v}=(v_1,v_2,\cdots ,v_n)\in {\left( {\mathbb{F}_q^ * } \right)^n}$ such that $\mathcal{C}={\varphi _{\pi ,\textit{\textbf{v}}}}(\mathcal{D})$ where ${\varphi _{\pi ,\boldsymbol{v}}}$ is the Hamming-metric isometry

\[{\varphi _{\pi ,v}}:\mathbb{F}_q^n \to \mathbb{F}_q^n,\left( {{c_1}, \ldots ,{c_n}} \right) \mapsto \left( {{v_1}{c_{\pi \left( 1 \right)}}, \ldots ,{v_n}{c_{\pi \left( n \right)}}} \right).\]
\end{definition}

\begin{remark}
It is easy to see that $\mathcal{C}_{1}^{2}$ and $\mathcal{C}_{2}^{2}$ are equivalent when $\mathcal{C}_{1}$ and $\mathcal{C}_{2}$ are equivalent.
\end{remark}
By the definitions of the GRS code and the extended GRS code, and Definition \ref{schur product}, we can get the following proposition directly.
\begin{proposition}\cite{CD2014}
  If $k \leq \frac{n}{2}$, then
\[
\mathcal{GRS}_{k,n}^{2}(\boldsymbol{a},\mathbf{1}) = \mathcal{GRS}_{2k-1,n}(\boldsymbol{a},\mathbf{1})
\]
and
\[
\mathcal{GRS}_{k,n-1}^{2}(\boldsymbol{a},\mathbf{1},\infty) = \mathcal{GRS}_{2k-1,n-1}(\boldsymbol{a},\mathbf{1},\infty).
\]
\end{proposition}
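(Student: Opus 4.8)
The plan is to reduce both identities to the monomial generating sets of RS codes, using that the Schur product is $\mathbb{F}_q$-bilinear. Recall that $\mathcal{GRS}_{k,n}(\boldsymbol{a},\mathbf{1})$ is generated by the evaluation vectors $\boldsymbol{m}_j=(a_1^j,\dots,a_n^j)$ for $0\le j\le k-1$, and that for linear codes one has $\mathcal{C}_1*\mathcal{C}_2=\langle\, \boldsymbol{b}_i*\boldsymbol{c}_j \,\rangle$ whenever $\{\boldsymbol{b}_i\}$ and $\{\boldsymbol{c}_j\}$ generate $\mathcal{C}_1$ and $\mathcal{C}_2$ respectively (expand a Schur product of two arbitrary codewords by bilinearity). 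Hence $\mathcal{GRS}_{k,n}^{2}(\boldsymbol{a},\mathbf{1})=\langle\, \boldsymbol{m}_a*\boldsymbol{m}_b : 0\le a,b\le k-1 \,\rangle$. Since $\boldsymbol{m}_a*\boldsymbol{m}_b=\boldsymbol{m}_{a+b}$, the inclusion $\mathcal{GRS}_{k,n}^{2}(\boldsymbol{a},\mathbf{1})\subseteq\mathcal{GRS}_{2k-1,n}(\boldsymbol{a},\mathbf{1})$ is immediate, and for the reverse inclusion it suffices to note that every $c$ with $0\le c\le 2k-2$ can be written as $c=a+b$ with $0\le a,b\le k-1$. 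The hypothesis $k\le\frac{n}{2}$ guarantees $2k-1\le n$, so $\{\boldsymbol{m}_c : 0\le c\le 2k-2\}$ is precisely the defining generating set of $\mathcal{GRS}_{2k-1,n}(\boldsymbol{a},\mathbf{1})$, which settles the first equality.

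For the extended case, I would take as generators of $\mathcal{GRS}_{k,n-1}(\boldsymbol{a},\mathbf{1},\infty)$ the vectors $\widetilde{\boldsymbol{m}}_j=(a_1^j,\dots,a_{n-1}^j,\delta_{j,k-1})$ for $0\le j\le k-1$, where $\delta$ denotes the Kronecker symbol, so that the last coordinate records the coefficient of $x^{k-1}$. Then $\widetilde{\boldsymbol{m}}_a*\widetilde{\boldsymbol{m}}_b=(a_1^{a+b},\dots,a_{n-1}^{a+b},\delta_{a,k-1}\delta_{b,k-1})$. For $(a,b)=(k-1,k-1)$ this equals $(a_1^{2k-2},\dots,a_{n-1}^{2k-2},1)$, which is exactly the generator of $\mathcal{GRS}_{2k-1,n-1}(\boldsymbol{a},\mathbf{1},\infty)$ attached to $x^{2k-2}$; for every other pair the last coordinate is $0$ and one gets $(a_1^c,\dots,a_{n-1}^c,0)$ with $c=a+b\le 2k-3$, which is the generator attached to $x^c$. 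Conversely, each $c$ with $0\le c\le 2k-3$ is of the form $a+b$ with $0\le a,b\le k-1$ and $(a,b)\ne(k-1,k-1)$, since the latter would force $c=2k-2$. Together with $2k-1\le n-1$, again from $k\le\frac{n}{2}$, this shows the Schur products of the $\widetilde{\boldsymbol{m}}_j$ span exactly the canonical generating set of $\mathcal{GRS}_{2k-1,n-1}(\boldsymbol{a},\mathbf{1},\infty)$, giving the second equality.

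This is essentially a direct computation, so there is no serious obstacle; the one point demanding a little care is propagating the extra ``infinity'' coordinate through the Schur product and checking that the degree ranges match up exactly, so that the computed span contains every canonical generator of the target code and nothing outside it. The inequality $k\le\frac{n}{2}$ enters only through $2k-1\le n$ (respectively $2k-1\le n-1$), which is what makes the listed evaluation vectors an honest generating set of an RS code of the asserted dimension.
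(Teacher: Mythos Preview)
Your argument is correct and is exactly the direct computation the paper has in mind: the paper does not give a proof of this proposition at all, merely stating that it follows immediately from the definitions (and citing \cite{CD2014}). Your reduction to the monomial generators $\boldsymbol{m}_j$ and $\widetilde{\boldsymbol{m}}_j$, together with the observation $\boldsymbol{m}_a*\boldsymbol{m}_b=\boldsymbol{m}_{a+b}$ and the careful bookkeeping of the infinity coordinate, is precisely that direct verification.
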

Hence, if an $[n,k]$ code $\mathcal{C}$ with $k<\frac{n+1}{2}$ satisfies $\dim(\mathcal{C}^2)\neq 2k-1$, then it is not equivalent to GRS code. In this paper, we also verify the non-GRS property of column TRS codes by the Schur product.

\section{Column Twisted Reed-Solomon Codes}
In this section, we first investigate one-column TRS codes. Building on this foundation, we provide some necessary and sufficient conditions for two-column TRS codes to be MDS, and verify their non-GRS property. Below we first give the definition of one-column TGRS codes.

For distinct $a_i, b, c\in \mathbb{F}_q$,  define
\[G_1(b,c)=\left(
      \begin{array}{ccccc}
        1 & 1 & \cdots & 1 & 1-\lambda  \\
        a_1 & a_2 & \cdots & a_{n-1}  & b-\lambda c  \\
        a_1^2 & a_2^2 & \cdots & a_{n-1}^2 & b^2-\lambda c^2  \\
        \vdots & \vdots & \vdots & \vdots & \vdots  \\
        a_1^{k-1} & a_2^{k-1} & \cdots & a_{n-1}^{k-1} & b^{k-1}-\lambda c^{k-1} \\
      \end{array}
    \right)
\] and
\[
G_1(b,c,\infty) =
\left(
      \begin{array}{cccccc}
        1 & 1 & \cdots & 1 & 1-\lambda & 0 \\
        a_1 & a_2 & \cdots & a_{n-1}  & b-\lambda c&  0  \\
        a_1^2 & a_2^2 & \cdots & a_{n-1}^2 & b^2-\lambda c^2&  0  \\
        \vdots & \vdots & \vdots & \vdots & \vdots & \vdots \\
        a_1^{k-2} & a_2^{k-2} & \cdots & a_{n-1}^{k-2} & b^{k-2}-\lambda c^{k-2}& 0 \\
        a_1^{k-1} & a_2^{k-1} & \cdots & a_{n-1}^{k-1} & b^{k-1}-\lambda c^{k-1}& 1  \\
      \end{array}
    \right).
\]
Denote by $C_1(b,c)$ and $C_1(b,c,\infty)$  linear codes generated by $G_1(b,c)$ and $G_1(b,c,\infty)$ respectively.
\begin{lemma}\label{one column criterion}
\begin{itemize}
  \item[(1)] The code $C_1(b,c)$   is MDS if and only if
 for any $1\leq i_1\leq \cdots \leq i_{k-1}\leq n-1$,
 \[\prod\limits_{j=1}^{k-1}(b-a_{i_j})-\lambda\prod\limits_{j=1}^{k-1}(c-a_{i_j})\neq 0.\]
  \item[(2)] The code $C_1(b,c,\infty)$   is MDS if and only if
 for any $1\leq i_1\leq \cdots \leq i_{s}\leq n-1$ with $s\in \{k-2, k-1\}$,
 \[\prod\limits_{j=1}^{s}(b-a_{i_j})-\lambda\prod\limits_{j=1}^{s}(c-a_{i_j})\neq 0.\]
\end{itemize}

\end{lemma}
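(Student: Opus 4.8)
The plan is to reduce the MDS condition to a statement about non-vanishing of all maximal minors of the generator matrix, and then evaluate those minors using the Vandermonde structure. Recall that an $[n,k]$ code is MDS if and only if every $k$ columns of its generator matrix are linearly independent, i.e. every $k\times k$ submatrix of $G_1(b,c)$ (respectively $G_1(b,c,\infty)$) is nonsingular. Since the first $n-1$ columns of $G_1(b,c)$ form a Vandermonde matrix in the distinct points $a_1,\dots,a_{n-1}$, any $k\times k$ submatrix using only those columns is automatically nonsingular; the only nontrivial case is a submatrix that uses the last column together with $k-1$ of the first $n-1$ columns. So for part (1) I would fix indices $1\le i_1<\cdots<i_{k-1}\le n-1$ and compute the determinant of the $k\times k$ matrix with columns indexed by $a_{i_1},\dots,a_{i_{k-1}}$ and the twisted column $(1-\lambda,\,b-\lambda c,\,\dots,\,b^{k-1}-\lambda c^{k-1})^{T}$.

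The key computational step is expanding this determinant by linearity in the last column: it splits as the determinant of the Vandermonde matrix on $a_{i_1},\dots,a_{i_{k-1}},b$ minus $\lambda$ times the Vandermonde matrix on $a_{i_1},\dots,a_{i_{k-1}},c$. Using the standard Vandermonde formula $\det V(x_1,\dots,x_k)=\prod_{r<s}(x_s-x_r)$, both terms share the common factor $\prod_{1\le r<s\le k-1}(a_{i_s}-a_{i_r})$, which is nonzero because the $a_i$ are distinct. Dividing out this common nonzero factor, the determinant is a nonzero scalar multiple of
\[
\prod_{j=1}^{k-1}(b-a_{i_j})-\lambda\prod_{j=1}^{k-1}(c-a_{i_j}),
\]
so it vanishes if and only if this expression vanishes. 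Ranging over all choices of $i_1,\dots,i_{k-1}$ gives exactly the stated criterion for (1).

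For part (2), the argument is the same in spirit but one must handle the extra ``$\infty$'' column, which has a single $1$ in the bottom row and zeros elsewhere. A $k\times k$ submatrix of $G_1(b,c,\infty)$ is nonsingular automatically unless it involves the twisted column; if it involves the $\infty$ column, Laplace expansion along that column reduces the determinant (up to sign) to a $(k-1)\times(k-1)$ minor of the remaining rows $0,\dots,k-2$. One then checks two subcases: the submatrix contains both the twisted column and the $\infty$ column (leading, after the Laplace step, to a $(k-2)$-fold product condition via the same Vandermonde split applied to rows $0,\dots,k-2$), or it contains the twisted column but not the $\infty$ column (leading to the $(k-1)$-fold product condition exactly as in part (1), since then only rows $0,\dots,k-2$ and one of rows $\dots$—wait, actually all $k$ rows are present). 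I would organize this as: submatrices avoiding the twisted column are fine; submatrices containing the twisted column but not the $\infty$ column give the $s=k-1$ condition; submatrices containing both give the $s=k-2$ condition after expanding along the $\infty$ column. Collecting these yields the criterion for $s=k-2,k-1$.

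The main obstacle is purely bookkeeping in part (2): correctly tracking which rows survive after the Laplace expansion along the $\infty$ column and confirming that the reduced minor is again a difference of two Vandermonde-type determinants with a nonzero common factor. There is no genuine difficulty—once the determinant is expanded by linearity in the twisted column and the Vandermonde formula is applied, everything factors transparently—but care is needed to state the index ranges cleanly and to verify that no other family of $k\times k$ submatrices can be singular.
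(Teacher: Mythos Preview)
Your proposal is correct and matches the paper's approach: reduce MDS to nonvanishing of all $k\times k$ minors, note that only minors containing the twisted column are at issue, and split that minor by linearity in the last column into a difference of two Vandermonde determinants sharing the nonzero factor $\prod_{r<s}(a_{i_s}-a_{i_r})$. The paper in fact only writes out part~(1) and omits~(2) as ``similar''; your case analysis for~(2) (twisted column alone gives $s=k-1$, twisted plus $\infty$ column and Laplace expansion along the $\infty$ column gives $s=k-2$) is the right way to fill that in.
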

\begin{proof} We only prove (1). The case (2) can be proven in a similar way and we omit the details here.

It is  sufficient to show that for  any $1\leq i_1\leq \cdots \leq i_{k-1}\leq n-1$, the submatrix of $G_1(b,c)$ consisting of the $i_1$-th, $\cdots$, $i_{k-1}$-th and $n$-th columns is nonsingular, that is,
\begin{multline*}
\det\left(
      \begin{array}{ccccc}
        1 & 1 & \cdots & 1& 1-\lambda \\
        a_{i_1} & a_{i_2} & \cdots &a_{i_{k-1}}& b-\lambda c \\
        \vdots & \vdots & \vdots & \vdots & \vdots\\
        a_{i_1}^{k-1} & a_{i_2}^{k-1} & \cdots & a_{i_{k-1}}^{k-1}& b^{k-1}-\lambda c^{k-1} \\
      \end{array}
    \right)
= \prod\limits_{1\leq j<l\leq k-1}(a_{i_l}-a_{i_j})\cdot \left(\prod\limits_{j=1}^{k-1}(b-a_{i_j})-\lambda\prod\limits_{j=1}^{k-1}(c-a_{i_j})\right)\neq 0.
\end{multline*}
Hence every $k$ columns of $G_1(b,c)$ are linearly independent which is equivalent to saying that $C_1(b,c)$ is MDS.
\end{proof}

Let $H$ be a  multiplicative subgroup of $\mathbb{F}_q^*$ with order not less than $n-1$.
\begin{theorem}\label{one column construction}
 For $1 \leq i \leq n-1$, let $a_i=(b-\mu_i c)/(1-\mu_i)$ where $\mu_i \in H\setminus\{1\}$ are distinct elements and $b \neq c$. If $\lambda\notin H$, then both $C_1(b,c)$ and $C_1(b,c,\infty)$ are MDS codes with parameters $[n,k,n-k+1]$ and $[n+1, k, n-k+2]$, respectively. In particular, if $3\leq k \leq \frac{n}{2}  $ and $\lambda\neq 0$, neither $C_1(b,c)$ nor $C_1(b,c,\infty)$ is  equivalent to any RS code or extended RS code.
\end{theorem}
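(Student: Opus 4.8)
The plan is to establish each claim separately: first the MDS property, then the non-equivalence to (extended) RS codes. For the MDS property I would invoke Lemma~\ref{one column criterion}. For $C_1(b,c)$, the task reduces to verifying that for every choice of indices $1\le i_1<\cdots<i_{k-1}\le n-1$ we have $\prod_{j=1}^{k-1}(b-a_{i_j})-\lambda\prod_{j=1}^{k-1}(c-a_{i_j})\neq 0$. The key computation is to substitute $a_i=(b-\mu_i c)/(1-\mu_i)$ and simplify $b-a_i$ and $c-a_i$. A direct calculation gives $b-a_i=\frac{b(1-\mu_i)-(b-\mu_i c)}{1-\mu_i}=\frac{\mu_i(c-b)}{1-\mu_i}$ and similarly $c-a_i=\frac{c-b}{1-\mu_i}$. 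Hence $\prod_{j=1}^{k-1}(b-a_{i_j})=(c-b)^{k-1}\prod_j\frac{\mu_{i_j}}{1-\mu_{i_j}}$ and $\prod_{j=1}^{k-1}(c-a_{i_j})=(c-b)^{k-1}\prod_j\frac{1}{1-\mu_{i_j}}$, so the difference equals $(c-b)^{k-1}\bigl(\prod_j\frac1{1-\mu_{i_j}}\bigr)\bigl(\prod_j\mu_{i_j}-\lambda\bigr)$. Since $b\neq c$ and all $\mu_i\neq 1$, this is nonzero precisely when $\prod_{j}\mu_{i_j}\neq\lambda$; but $\prod_j\mu_{i_j}\in H$ (as $H$ is a group) while $\lambda\notin H$, so the inequality holds. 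The case of $C_1(b,c,\infty)$ is identical, now over products of length $s=k-2$ and $s=k-1$, and the same argument applies since any such product lies in $H$. This yields the stated parameters $[n,k,n-k+1]$ and $[n+1,k,n-k+2]$.

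For the non-equivalence statement I would use the Schur-square criterion recalled right before the section: an $[N,k]$ code with $k<\frac{N+1}{2}$ that is equivalent to a (extended) RS code must satisfy $\dim(\mathcal C^2)=2k-1$, so it suffices to show $\dim(C_1(b,c)^2)=2k$ (and likewise for the extended version). To compute the Schur square, note that $C_1(b,c)$ is spanned by the rows $\boldsymbol r_m=(a_1^m,\dots,a_{n-1}^m,\,b^m-\lambda c^m)$ for $0\le m\le k-1$. The Schur product $\boldsymbol r_m*\boldsymbol r_{m'}$ has first $n-1$ coordinates $a_i^{m+m'}$ and last coordinate $(b^m-\lambda c^m)(b^{m'}-\lambda c^{m'})=b^{m+m'}-\lambda(b^m c^{m'}+b^{m'}c^m)+\lambda^2 c^{m+m'}$. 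Thus $C_1(b,c)^2$ is contained in the span of the $2k-1$ ``RS-type'' vectors $(a_1^p,\dots,a_{n-1}^p,\,b^p-\lambda c^p)$ for $0\le p\le 2k-2$ together with the extra vectors coming from the cross-terms $b^m c^{m'}+b^{m'}c^m$. The plan is to show that exactly one new dimension is added: I expect that modulo the RS-type span, all the Schur products collapse to multiples of a single vector (concretely, taking $m=0$, $\boldsymbol r_0*\boldsymbol r_{m'}$ has last entry $1-\lambda(c^{m'}+b^{m'})+\lambda^2c^{m'}$ — wait, more carefully $\boldsymbol r_0=(1,\dots,1,1-\lambda)$, so I should instead isolate the genuinely new coordinate behavior by comparing $\boldsymbol r_1*\boldsymbol r_1$ with $\boldsymbol r_0*\boldsymbol r_2$, whose first $n-1$ entries agree but whose last entries differ by $\lambda(b-c)^2\neq 0$). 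Using that $b\neq c$ and $\lambda\neq 0$, this single discrepancy vector is not in the RS-type span (its first $n-1$ coordinates vanish while the RS-type vectors restricted to the first $n-1$ coordinates already span all of $\mathbb F_q^{\,k'}$ for the relevant degrees when $k\ge 3$), giving $\dim(C_1(b,c)^2)=2k$.

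The main obstacle will be the Schur-square dimension count: one must show both the upper bound $\dim(C_1(b,c)^2)\le 2k$ (by exhibiting a spanning set of size $2k$, which requires checking that all cross-term contributions reduce modulo the RS-type vectors to scalar multiples of one fixed vector) and the lower bound $\dim(C_1(b,c)^2)\ge 2k$ (by producing $2k$ Schur products that are linearly independent, where the hypotheses $k\ge 3$, $\lambda\neq 0$, $b\neq c$, and the distinctness of the $a_i$ all enter). The condition $k\ge 3$ is needed so that the degree-$(2k-2)$ RS part genuinely has dimension $2k-1$ and so that there is ``room'' to see the extra twist coordinate; the condition $\lambda\neq 0$ ensures the twist is nontrivial. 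For the extended code $C_1(b,c,\infty)$ one argues analogously, tracking the extra coordinate $f_{k-1}$ through the Schur product, and again concludes the Schur square has dimension $2k$ rather than the $2k-1$ forced by equivalence to an extended RS code. Combining the MDS property with $\dim(\mathcal C^2)=2k\neq 2k-1$ and the criterion from the Preliminaries completes the proof.
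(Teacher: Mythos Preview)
Your approach is essentially the paper's: invoke Lemma~\ref{one column criterion} after observing that $(b-a_i)/(c-a_i)=\mu_i\in H$, and for non-equivalence compare the Schur products $\boldsymbol r_0*\boldsymbol r_2$ and $\boldsymbol r_1*\boldsymbol r_1$ to produce the vector $(0,\dots,0,\lambda(b-c)^2)$, forcing $\dim(C_1(b,c)^2)\ge 2k$. Two small points where you diverge from the paper. First, the paper explicitly checks that the $a_i$ are pairwise distinct (and distinct from $b,c$), which is needed before Lemma~\ref{one column criterion} even applies; this is implicit in your computation since $a_i\mapsto\mu_i$ is a M\"obius map, but you should state it. Second, and more substantively, you set yourself the harder task of proving $\dim(C_1(b,c)^2)=2k$ exactly, listing the upper bound as ``the main obstacle.'' For Theorem~\ref{one column construction} as stated this is unnecessary: the paper only establishes $\dim\ge 2k$, which already contradicts the value $2k-1$ forced by equivalence to an (extended) RS code. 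The exact value $2k$ is only claimed and proved later for the two-column codes in Theorem~\ref{two column construction}. So your plan is correct but does more work than the statement requires.
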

\begin{proof}
Now we only show the result of $C_1(b,c)$. The code $C_1(b,c,\infty)$ can be proved similarly and we omit the details here.

First, we prove that the $a_i$ are pairwise distinct. Observe that $(a_i - b) = (a_i - c)\mu_i$. If $a_i = c$, then we would have $a_i - b = 0$ implying $a_i = b = c$, which is a contradiction. Therefore $a_i \neq c$, and similarly $a_i \neq b$.
Now suppose there exist $s \neq t$ with $1 \leq s , t \leq n-1$
such that $a_s = a_t$. Then we have:
$$
\frac{a_s - b}{a_s - c} = \frac{a_t - b}{a_t - c}
$$
which implies $\mu_s = \mu_t$, again a contradiction.
Hence all $a_i$ are distinct. Thus, for any $1\leq i_1\leq \cdots \leq i_{k-1}\leq n-1$,
\[\prod\limits_{j=1}^{k-1}\frac{b-a_{i_j}}{c-a_{i_j}}=\prod\limits_{j=1}^{k-1}\mu_{i_j}\in H.\]
Since $\lambda\notin H$, by Lemma \ref{one column criterion}, the code $C_1(b,c)$ is MDS.

It remains to show $C_1(b,c)$ for $\lambda\neq 0$ is not equivalent to any RS code or extended RS code.
 The Schur product of $C_1(b,c)$, denoted by $C_1(b,c)* C_1(b,c)$,  has a generator matrix whose rows are
 \[\left(a_1^{i+j}, a_2^{i+j},\cdots, a_{n-1}^{i+j}, (b^i-\lambda c^i)(b^j-\lambda c^j)\right).\]

 Note that $b\neq c$. Choosing $(i,j)=(0,2), (1,1)$ respectively, we obtain two rows in $C_1(b,c)* C_1(b,c)$:
   \[\left(a_1^{2}, a_2^{2},\cdots, a_{n-1}^{2}, (1-\lambda )(b^2-\lambda c^2)\right)\]
   and
   \[\left(a_1^{2}, a_2^{2},\cdots, a_{n-1}^{2}, (b-\lambda c)^2\right).\]
  The difference of the above two rows is
   \[\left(0,\cdots, 0, \lambda(b-c)^2\right)\neq {\bf 0}.\]
   Therefore, $C_1(b,c)* C_1(b,c)$ has a generator matrix
   \[ \small
    \left(
     \begin{array}{ccccc}
       1 & 1 & \cdots & 1 & \ast \\[2mm]
       a_1 & a_2 & \cdots & a_{n-1} & \ast \\[2mm]
       a_1^2 & a_2^2 & \cdots & a_{n-1}^2 & (1-\lambda )(b^2-\lambda c^2) \\
       \vdots & \vdots & \vdots & \vdots & \vdots \\
       a_1^{2k-2} & a_2^{2k-2} & \cdots & a_{n-1}^{2k-2} & \ast \\[2mm]
       a_1^2 & a_2^2 & \cdots & a_{n-1}^2 & (b-\lambda c)^2 \\[2mm]
       \vdots & \vdots & \vdots & \vdots & \vdots \\
     \end{array}
   \right)\]
  which is row equivalent to
    \[\small
    \left(
     \begin{array}{ccccc}
       1 & 1 & \cdots & 1 & \ast \\[2mm]
       a_1 & a_2 & \cdots & a_{n-1} & \ast \\[2mm]
       a_1^2 & a_2^2 & \cdots & a_{n-1}^2 & (1-\lambda )(b^2-\lambda c^2) \\[2mm]
       \vdots & \vdots & \vdots & \vdots & \vdots \\
       a_1^{2k-2} & a_2^{2k-2} & \cdots & a_{n-1}^{2k-2} & \ast \\[2mm]
       0 & 0 & \cdots & 0 & \lambda(b-c)^2 \\
       \vdots & \vdots & \vdots & \vdots & \vdots \\
     \end{array}
   \right)
   \]
   whose rank is  $2k$.  As a result, $C_1(b,c)$ is not equivalent to any $k$-dimensional (extended) RS code whose rank of Schur product is $2k-1$.
\end{proof}

Based on the above observations, we present the necessary and sufficient conditions for two-column TGRS codes to be MDS codes.

Suppose $a_i$ ($1\leq i\leq n-1$), $b$,  and $c$ in $\mathbb{F}_q$ are all distinct. For $\lambda_1\neq \lambda_2$,  define
\[
G_2(b,c) =
\left(
      \begin{array}{ccccc}
        1  & \cdots & 1 & 1-\lambda_1 & 1-\lambda_2 \\
        a_1  & \cdots & a_{n-1}  & b-\lambda_1 c  & b-\lambda_2 c \\
        a_1^2  & \cdots & a_{n-1}^2 & b^2-\lambda_1 c^2  & b^2-\lambda_2 c^2\\
        \vdots  & \vdots & \vdots & \vdots &\vdots \\
        a_1^{k-1}  & \cdots & a_{n-1}^{k-1} & b^{k-1}-\lambda_1 c^{k-1} & b^{k-1}-\lambda_2  c^{k-1}\\
      \end{array}
    \right)
\]
and
\[
G_2(b,c,\infty) =
\left(
      \begin{array}{cccccc}
        1  & \cdots & 1 & 1-\lambda_1 & 1-\lambda_2 & 0 \\
        a_1  & \cdots & a_{n-1}  & b-\lambda_1 c  & b-\lambda_2 c & 0 \\
        a_1^2  & \cdots & a_{n-1}^2 & b^2-\lambda_1 c^2  & b^2-\lambda_2 c^2 & 0\\
        \vdots  & \vdots & \vdots & \vdots &\vdots &\vdots \\
        a_1^{k-2}  & \cdots & a_{n-1}^{k-2} & b^{k-2}-\lambda_1 c^{k-2} & b^{k-2}-\lambda_2  c^{k-2} & 0\\
        a_1^{k-1}  & \cdots & a_{n-1}^{k-1} & b^{k-1}-\lambda_1 c^{k-1} & b^{k-1}-\lambda_2  c^{k-1} & 1\\
      \end{array}
    \right).
\]
\begin{lemma}\label{two column criterion}
  \begin{itemize}
\item[(1)] The code $C_2(b,c)$ generated by $G_2(b,c)$ is MDS if and only if for any $1\leq i_1\leq \cdots \leq i_{k-1}\leq n-1$,
\begin{equation}
\begin{split}
\prod\limits_{j=1}^{k-1}(b-a_{i_j}) - \lambda_1\prod\limits_{j=1}^{k-1}(c-a_{i_j}) &\neq 0, \\
\prod\limits_{j=1}^{k-1}(b-a_{i_j}) - \lambda_2\prod\limits_{j=1}^{k-1}(c-a_{i_j}) &\neq 0.
\end{split}
\label{two column}
\end{equation}
\item[(2)] Similarly, the code $C_2(b,c,\infty)$ generated by $G_2(b,c,\infty)$ is MDS if and only if for any $1\leq i_1\leq \cdots \leq i_{s}\leq n-1$, $s=k-2$ or $k-1$,
\begin{equation}
\begin{split}
\prod\limits_{j=1}^{s}(b-a_{i_j})-\lambda_1\prod\limits_{j=1}^{s}(c-a_{i_j}) &\neq 0, \\
\prod\limits_{j=1}^{s}(b-a_{i_j})-\lambda_2\prod\limits_{j=1}^{s}(c-a_{i_j}) &\neq 0.
\end{split}
\label{two column extended}
\end{equation}
\end{itemize}
\end{lemma}
\begin{proof}
  We only prove (1). The case (2) can be proven in a similar way and we omit it.

  The proof proceeds similarly to Lemma \ref{one column criterion}. The only difference lies in the need to verify the cases for $1\leq i_1\leq \cdots \leq i_{k-2}\leq n-1$,
\[D:=\left|\begin{array}{ccccc}
1 & \cdots & 1 & 1-\lambda_{1} & 1-\lambda_{2} \\
a_{i_{1}} & \cdots & a_{i_{k-2}} & b-\lambda_{1}c & b-\lambda_{2}c \\
\vdots &  & \vdots & \vdots & \vdots \\
a_{i_{1}}^{k-1} & \cdots & a_{i_{k-2}}^{k-1} & b^{k-1}-\lambda_{1}c^{k-1} & b^{k-1}-\lambda_{2}c^{k-1}
\end{array}\right| \neq 0.\]
A straightforward calculation shows that
\[D= \left( \lambda_{1} - \lambda_{2} \right)
\left|\begin{array}{ccccc}
1 & \cdots & 1 & 1 & 1 \\
a_{i_{1}} & \cdots & a_{i_{k-2}} & b & c \\
\vdots &  & \vdots & \vdots & \vdots \\
a_{i_{1}}^{k-1} & \cdots & a_{i_{k-2}}^{k-1} & b^{k-1} & c^{k-1}
\end{array}\right|.
\]
Since $\lambda_{1} \neq \lambda_{2}$ and all \( a_i \) (\( 1 \leq i \leq n-1\)), \( b \),  and \( c \) in \( \mathbb{F}_q \) are distinct,
the determinant is obviously nonzero. Therefore, we have obtained the necessary and sufficient condition for $C_2(b,c)$ to be MDS.
\end{proof}
 The condition (\ref{two column}) is satisfied in the following way: let $H$ be a  multiplicative subgroup of $\mathbb{F}_q^*$ with order not less than $n-1$.  For distinct elements $\mu_i\in H\setminus\{1\}$, choose
    \begin{equation}
    a_i=(b-\mu_i c)/(1-\mu_i),\qquad \lambda_1, \lambda_2\notin H, \qquad \lambda_1\neq \lambda_2.\label{two column a}
    \end{equation}
      Then $(b-a_i)/(c-a_i)=\mu_i$.

    Now for any $1\leq i_1\leq \cdots \leq i_{k-1}\leq n-1$, $l=1,2$ and $s=k-2, k-1$,
    \[\prod\limits_{j=1}^{s}\frac{b-a_{i_j}}{c-a_{i_j}}=\prod\limits_{j=1}^{s}\mu_{i_j}\in H \; \text{and}\; \lambda_l\notin H.\]
    As a result, the inequalities (\ref{two column}) and  (\ref{two column extended}) hold.

   \begin{theorem}\label{two column construction}
   For positive integers $n, k$, suppose that:
   \begin{itemize}
     \item $a_i, \lambda_1, \lambda_2$ are chosen according to condition  (\ref{two column a}),
     \item $C_2(b,c)$ and $C_2(b,c,\infty)$ are the linear codes generated by $G_2(b,c)$ and $G_2(b,c,\infty)$, respectively.
   \end{itemize}
  Then
\begin{itemize}
  \item[(1).] $C_2(b,c)$ and $C_2(b,c,\infty)$ are MDS codes.
  \item[(2).] For $3\leq k\leq \frac{n}{2} $,  the Schur squares codes of both $C_2(b,c)$ and $C_2(b,c,\infty)$ have dimension $2k$. In particular, both $C_2(b,c)$ and $C_2(b,c,\infty)$ are nonequivalent to any RS code or extended RS code.
   \end{itemize}
   \end{theorem}

   \begin{proof}  It remains to show that $C_2(b,c)$ is nonequivalent to any RS code or extended RS code. We only show the case $C_2(b,c)$.
   The Schur product of $C_2(b,c)$  has a generator matrix $G_2(b,c)* G_2(b,c)$ whose rows are
 \[
 \left(a_1^{i+j},\cdots, a_{n-2}^{i+j}, (b^i-\lambda_1 c^i)(b^j-\lambda_1 c^j) ,(b^i-\lambda_2 c^i)(b^j-\lambda_2 c^j)\right).
 \]
If $k\geq 3$, choosing $(i,j)=(0,2), (1,1)$ respectively, we obtain two rows
   \[\left(a_1^{2}, a_2^{2},\cdots, a_{n-1}^{2}, (1-\lambda_1 )(b^2-\lambda_1 c^2), (1-\lambda_2 )(b^2-\lambda_2 c^2)\right)\]
   and
   \[\left(a_1^{2}, a_2^{2},\cdots, a_{n-1}^{2}, (b-\lambda_1 c)^2,  (b-\lambda_2 c)^2\right).\]
   In this case, the difference of the above two vectors is
   \begin{equation}
   \boldsymbol{\alpha}=\left(0,\cdots, 0, \lambda_1(b-c)^2, \lambda_2(b-c)^2\right)\neq {\bf 0}.\label{diff k=3}
    \end{equation}
    Below we choose $(i,j)$ and $(s,r)$ where $i+j=s+r$. Then we obtain two row vectors:

 \[(a_{1}^{i+j}, \cdots, a_{n-2}^{i+j}, (b^{i}-\lambda_{1}c^{i})(b^{j}-\lambda_{1}c^{j}), (b^{i}-\lambda_{2}c^{i})(b^{j}-\lambda_{2}c^{j})) \]
 and
\[ (a_{1}^{s+r}, \cdots, a_{n-2}^{s+r}, (b^{s}-\lambda_{1}c^{s})(b^{r}-\lambda_{1}c^{r}), (b^{s}-\lambda_{2}c^{s})(b^{r}-\lambda_{2}c^{r})).\]
Subtracting these two,

\[ \small
\boldsymbol{\beta} = (0, \cdots, 0, \lambda_{1}(b^{s}c^{r}+b^{r}c^{s}-b^{i}c^{j}-b^{j}c^{i}), \lambda_{2}(b^{s}c^{r}+b^{r}c^{s}-b^{i}c^{j}-b^{j}c^{i})).
\]
Clearly:
\[
\frac{b^{s}c^{r}+b^{r}c^{s}-b^{i}c^{j}-b^{j}c^{i}}{(b-c)^{2}}  \boldsymbol{\alpha} = \boldsymbol{\beta}.
\]
Define
\[\mathcal{L}= \left \{ l(i,j,s,r)\mid i+j=s+r,0\le i,j,s,r\le k-1  \right \}, \]
where $l(i,j,s,r)=\frac{b^{s}c^{r}+b^{r}c^{s}-b^{i}c^{j}-b^{j}c^{i}}{(b-c)^{2}}$.  Then, rewrite  $\mathcal{L}=\left \{ l_1,l_2,\dots ,l_t  \right \}$ with $t=\left\lvert \mathcal{L}\right\rvert$.
Therefore, $C_{2}(b,c) * C_{2}(b,c)$ has a generator matrix

$$\small
\left(
\begin{array}{ccccc}
1  & \cdots & 1 & * & * \\
a_1  & \cdots & a_{n-2} & * & * \\
a_1^2  & \cdots & a_{n-2}^2 & (1-\lambda_1)(b^2-\lambda_1 c^2) & (1-\lambda_2)(b^2-\lambda_2 c^2) \\
\vdots  & \vdots & \vdots & \vdots & \vdots \\
a_1^{2k-2}  & \cdots & a_{n-2}^{2k-2} & * & * \\
a_1^2  & \cdots & a_{n-2}^2 & (b-\lambda_1 c)^2 & (b-\lambda_2 c)^2 \\
\vdots  & \vdots & \vdots & \vdots & \vdots
\end{array}
\right)
$$
which is row equivalent to
$$\small
\left(\begin{array}{ccccccc}
1 & \cdots & 1 & * & \cdots & * \\
a_{1} & \cdots & a_{n-2} & * & \cdots & * \\
a_{1}^{2} & \cdots & a_{n-2}^{2} & (1-\lambda_{1})(b^{2}-\lambda_{1}c^{2}) & \cdots & (1-\lambda_{2})(b^{2}-\lambda_{2}c^{2}) \\
\vdots & \vdots & \vdots & \vdots & \vdots & \vdots \\
a_{1}^{2k-2} & \cdots & a_{n-2}^{2k-2} & * & \cdots & * \\
\multicolumn{3}{c}{} & \boldsymbol{\alpha} \\
\multicolumn{3}{c}{} & l_{1}\boldsymbol{\alpha} \\
\multicolumn{3}{c}{} & \vdots \\
\multicolumn{3}{c}{} & l_{t}\boldsymbol{\alpha}
\end{array}\right).
$$
Clearly, the rank of the upper half of the above matrix is $2k-1$, while the rank of the lower half is 1. Hence, the rank of a generator matrix of $C_{2}(b,c) * C_{2}(b,c)$ is $2k$. Similarly,  $C_{2}(b,c,\infty) * C_{2}(b,c,\infty)$ also has dimension $2k$.  Hence $C_2(b,c)$ and $C_{2}(b,c,\infty)$ are not equivalent to (extended) RS codes.
   \end{proof}
\begin{remark}
  In general, the MDS codes in the above result are not equivalent to previous TGRS codes since their Schur product codes have different dimensions. Here are some examples. Let \(\boldsymbol{a} = (a_1, a_2, \ldots, a_n) \in \mathbb{F}_q^n\) with distinct $a_i$. Then, the code $C_2(b,c)$ and  $C_{2}(b,c,\infty)$ are not equivalent to the following TRS codes:
  \begin{enumerate}
    \item (\!\!\cite{SM2022}) For $3 \leq k < \lfloor  \frac{n}{2}\rfloor-1 $, $l = 2$, $\boldsymbol{t}=(0,1)$, $\boldsymbol{h}=(k-1,k-2)$, and $\boldsymbol{\eta}=(\eta_1,\eta_2)\in(\mathbb{F}_q^*)^2$, the code $\mathcal{C}_{k,n,\boldsymbol{t},\boldsymbol{h}}(\boldsymbol{a},\boldsymbol{1},\boldsymbol{\eta})$ satisfies $\dim((\mathcal{C}_{k,n,\boldsymbol{t},\boldsymbol{h}}(\boldsymbol{a},\boldsymbol{1},\boldsymbol{\eta}))^2)\geq 2k+1$.

    \item (\!\!\cite{SJM2022}) For $3 \leq k < \frac{n}{2}$, $l = 1$, $0 \leq t \leq n-k $, $0 \leq h \leq k$, and $\eta \in \mathbb{F}_q^*$, the code $\mathcal{C}_{k,n,\boldsymbol{t},\boldsymbol{h}}(\boldsymbol{a},\boldsymbol{1},\eta)$ satisfies $\dim((\mathcal{C}_{k,n,\boldsymbol{t},\boldsymbol{h}}(\boldsymbol{a},\boldsymbol{1},\eta))^2)\geq 2k$.

    \item (\!\!\cite{CA2024}) For $4 \leq k < \frac{n}{2}$, $l = 2$, $\boldsymbol{t}=(1,2)$, $\boldsymbol{h}=(k-2,k-2)$, and $\boldsymbol{\eta}=(\eta_1,\eta_2)\in(\mathbb{F}_q^*)^2$, the code $\mathcal{C}_{k,n,\boldsymbol{t},\boldsymbol{h}}(\boldsymbol{a},\boldsymbol{1},\boldsymbol{\eta})$ satisfies $\dim((\mathcal{C}_{k,n,\boldsymbol{t},\boldsymbol{h}}(\boldsymbol{a},\boldsymbol{1},\boldsymbol{\eta}))^2)\geq 2k$.

\end{enumerate}
\end{remark}
   Different subgroups $H$ will produce MDS codes with different maximal length. Combining Theorems \ref{one column construction} and \ref{two column construction}, we obtain a large class of MDS codes that are inequivalent to both classical (extended) RS codes and conventionally defined (extended) TGRS codes. We now systematically outline the detailed construction procedure. For prime power $q$, the construction of two-column TGRS code with parameters $[n,k,n-k+1]$ can be divided into the following five steps:
   \begin{enumerate}
    \item Select a multiplicative subgroup $H$ of $\mathbb{F}_q^*$ and choose distinct elements $b \neq c \in \mathbb{F}_q^*$.

    \item Take a set $\{\mu_1, \mu_2, \dots, \mu_n\} \subseteq  H\setminus\{1\}$, where $1$ denotes the identity element of $H$.

    \item Compute the evaluation points $\{a_1, a_2, \dots, a_n\}$ via:
    $$
    a_i = \frac{b - \mu_i c}{1 - \mu_i}.
    $$

    \item Choose distinct elements $\lambda_1 \neq \lambda_2 \in \mathbb{F}_q^* \setminus H$, with the dimension satisfying $3 \leq k \leq n/2$.

    \item Construct the generator matrices $G_2(b,c)$ and $G_2(b,c,\infty)$.
\end{enumerate}

   \begin{corollary}
   The following MDS codes inequivalent to (extended) RS code can be explicitly proposed.
   \begin{itemize}
     \item[(1)] For $q$ odd prime power, there exists MDS code with parameters $[n, k, n-k+1]_q$ for any $n\leq \frac{q+3}{2}$.
     \item[(2)] For $q=2^{2m}$, there exists MDS code with parameters $[n, k, n-k+1]_q$ for any $n\leq \frac{q+8}{3}$.
   \end{itemize}
   \end{corollary}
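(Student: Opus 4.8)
The plan is to realize these codes by specializing the MDS criteria $(\ref{two column})$ and $(\ref{two column extended})$ to a subgroup $H\leq\mathbb{F}_q^*$ of the largest admissible order, and --- this is the new ingredient for part~(2) --- to let the auxiliary parameters $\mu_i$ range over a nontrivial \emph{coset} of $H$ rather than over $H$ itself. Throughout, fix distinct $b,c\in\mathbb{F}_q^*$. First I would record, exactly as in the proof of Theorem~\ref{one column construction}, that for pairwise distinct $\mu_1,\dots,\mu_m\in\mathbb{F}_q^*\setminus\{1\}$ the points $a_i=(b-\mu_i c)/(1-\mu_i)$ are pairwise distinct and distinct from $b,c$, with $(b-a_i)/(c-a_i)=\mu_i$; consequently every product $\prod_{j=1}^{s}(b-a_{i_j})/(c-a_{i_j})$ equals $\prod_{j=1}^{s}\mu_{i_j}$ and lies in the single $H$-coset $(\mu_{i_1}H)\cdots(\mu_{i_s}H)$. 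Hence $(\ref{two column})$ for $C_2(b,c)$ (resp.\ $(\ref{two column extended})$ for $C_2(b,c,\infty)$) holds as soon as $\lambda_1,\lambda_2$ avoid every $H$-coset that $\prod_{j=1}^{k-1}\mu_{i_j}$ (resp.\ $\prod_{j=1}^{k-2}\mu_{i_j}$ or $\prod_{j=1}^{k-1}\mu_{i_j}$) can occupy, and then the code is MDS; the non-equivalence to every (extended) RS code for $3\leq k\leq n/2$ follows verbatim from the Schur-square computation of Theorem~\ref{two column construction}(2), which uses only $\lambda_l\neq 0$, $b\neq c$ and $3\leq k\leq n/2$.

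For part~(1) I would take $q$ odd and $H$ the subgroup of nonzero squares, so $H$ is proper with $|H|=\frac{q-1}{2}$ and $|\mathbb{F}_q^*\setminus H|=\frac{q-1}{2}\geq 2$ for $q\geq 5$. Choosing $m\leq|H|-1=\frac{q-3}{2}$ distinct $\mu_i\in H\setminus\{1\}$ and distinct $\lambda_1,\lambda_2\in\mathbb{F}_q^*\setminus H$, every product lies in $H$ while $\lambda_l\notin H$, so $(\ref{two column extended})$ holds and $C_2(b,c,\infty)$ is an $[\,m+3,\;k,\;m+4-k\,]_q$ MDS code, inequivalent to every (extended) RS code when $3\leq k\leq\frac{m+3}{2}$. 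Letting $m$ run over $0,1,\dots,\frac{q-3}{2}$ (and putting $m=n-3$) produces the required $[n,k,n-k+1]_q$ codes for every $n\leq\frac{q-3}{2}+3=\frac{q+3}{2}$.

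For part~(2) I would take $q=2^{2m}$, so $q-1=4^m-1$ is odd and divisible by $3$ (because $4^m\equiv 1\pmod 3$); thus $3$ is the least prime factor of $q-1$ and $\mathbb{F}_q^*$ has a subgroup $H$ of index $3$ with $|H|=\frac{q-1}{3}$, and I write $\mathbb{F}_q^*$ as the disjoint union $H\cup\gamma H\cup\gamma^2 H$. Now I would choose the $m$ distinct parameters $\mu_i$ inside the coset $\gamma H$: this permits $m\leq|\gamma H|=\frac{q-1}{3}$, and $\mu_i\neq 1$ is automatic since $1\in H$. For any index set $\prod_{j=1}^{k-2}\mu_{i_j}\in\gamma^{k-2}H$ and $\prod_{j=1}^{k-1}\mu_{i_j}\in\gamma^{k-1}H$, which are two \emph{distinct} cosets (consecutive powers of $\gamma$ in the order-$3$ quotient), so I pick the remaining coset $\gamma^{e}H$ and take distinct $\lambda_1,\lambda_2\in\gamma^{e}H$ --- possible as soon as $|H|\geq 2$, i.e.\ $q\geq 7$, the case $q=4$ being vacuous. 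Then $(\ref{two column extended})$ holds, $C_2(b,c,\infty)$ is an $[\,m+3,\;k,\;m+4-k\,]_q$ MDS code inequivalent to every (extended) RS code for $3\leq k\leq\frac{m+3}{2}$, and letting $m$ reach $\frac{q-1}{3}$ gives all lengths $n\leq\frac{q-1}{3}+3=\frac{q+8}{3}$.

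The step I expect to be the real obstacle is the coset idea in part~(2) together with seeing why it cannot be pushed further. Keeping $\mu_i\in H$ would force the exclusion $\mu_i\neq 1$, costing one evaluation point and giving only $\frac{q-1}{3}+2$; passing to the coset $\gamma H$ recovers that point and yields $\frac{q+8}{3}=\frac{q-1}{3}+3$, but this is legitimate only because the products $\prod_j\mu_{i_j}$ then occupy merely two of the three $H$-cosets, leaving one free for $\lambda_1,\lambda_2$ --- so the trick requires $[\mathbb{F}_q^*:H]\geq 3$ and breaks down for the index-$2$ square subgroup of part~(1), where the two occupied cosets exhaust $\mathbb{F}_q^*$; this is precisely why part~(1) must retain $\mu_i\in H\setminus\{1\}$ and halts at $\frac{q+3}{2}$. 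The remaining, entirely routine, point to confirm is that the distinctness argument of Theorem~\ref{one column construction} and the Schur-square dimension count of Theorem~\ref{two column construction}(2) are insensitive to whether the $\mu_i$ lie in $H$ or in a coset, since they appeal only to the $\mu_i$ being distinct and $\neq 1$, to $b\neq c$, and to $\lambda_l\neq 0$.
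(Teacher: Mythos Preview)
Your argument is correct. Part~(1) coincides with the paper's proof: both take $H$ to be the index-$2$ subgroup of squares, $\mu_i\in H\setminus\{1\}$, and invoke Theorem~\ref{two column construction} directly.

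For part~(2) you take a genuinely different route. The paper keeps the $\mu_i$ in $H\setminus\{1\}$ and then adjoins a single extra element $w^2\in w^2H$ to recover the evaluation point lost to the exclusion $\mu_i\neq 1$; since at most one factor in any product $\prod_j\mu_{i_j}$ (over distinct indices) can equal $w^2$, the products lie in $H\cup w^2H$, and one may fix $\lambda_1,\lambda_2\in wH$ independently of $k$. You instead place all the $\mu_i$ inside a single nontrivial coset $\gamma H$, so that the products of fixed length $s$ lie in the single coset $\gamma^sH$; the two relevant lengths $s=k-2,k-1$ then occupy exactly two of the three cosets, and $\lambda_1,\lambda_2$ go in the third. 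Your version is tidier and avoids the somewhat ad hoc ``one extra element'' device, at the minor cost that the admissible coset for $\lambda_1,\lambda_2$ depends on $k\bmod 3$, whereas the paper's choice $wH$ works uniformly for all $k$. Both constructions reach the same maximal length $\frac{q+8}{3}$, and your observation that the distinctness argument and the Schur-square dimension count use only $\mu_i$ distinct and $\neq 1$, $b\neq c$, $\lambda_l\neq 0$ (and not $\mu_i\in H$) is precisely what legitimizes either variant.
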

   \begin{proof}
   \begin{itemize}
     \item[(1)]  For an odd prime power $q$, let $H$ be the subgroup of squares in $\mathbb{F}_q^*$, and let $\{\mu_i\}$ range over all elements of $H\setminus\{1\}$. Thus, for any $1\leq i_1\leq \cdots \leq i_{k-1}\leq n-1$,
     \[\prod\limits_{j=1}^{k-1}\frac{b-a_{i_j}}{c-a_{i_j}}=\prod\limits_{j=1}^{k-1}\mu_{i_j}\in H.\]
     Since $\lambda_1 \notin H$, then
     \[\prod\limits_{j=1}^{k-1}\frac{b-a_{i_j}}{c-a_{i_j}}\neq \lambda_1.\]Therefore, from Lemma(\ref{one column criterion}), $C_1(b,c)$ and $C_1(b,c,\infty)$ are MDS codes. Similarly, from Lemma(\ref{two column criterion}), we obtain that $C_2(b,c)$ and $C_2(b,c,\infty)$ are MDS codes. Since $\mu_i$ runs through all of $H\setminus\{1\}$, the generator matrix of $C_2(b,c,\infty)$ therefore has $\left\lvert H\setminus\{1\}\right\rvert $ plus the three column vectors we added, totaling $\left\lvert H\setminus\{1\}\right\rvert+3 $ column vectors. Hence the code length of $C_2(b,c,\infty)$ is $\left\lvert H\setminus\{1\}\right\rvert+3=\frac{q-1}{2}-1+3=\frac{q+3}{2}  $. Therefore, we obtain that the code-lengths of $C_1(b,c)$, $C_1(b,c,\infty)$, $C_2(b,c)$, and $C_2(b,c,\infty)$ are at most $\frac{q-1}{2}$,  $\frac{q+1}{2}$, $\frac{q+1}{2}$, and $\frac{q+3}{2}$, respectively.

     \item[(2)]   For $q=2^{2m}$, choosing $H$ as subgroup consisting of all cubics in $\mathbb{F}_q^*$. Let $w$ be a primitive element of $\mathbb{F}_q$ and let $\{\mu_i\}$ take all elements in $\{w^2\}\cup (H \setminus \{1\})$. Then we obtain the set of evaluation points $\{a_1, \dots, a_{\left\lvert H \right\rvert -1}, a_{\left\lvert H \right\rvert}\}$. Take $\lambda_1 \neq \lambda_2$ with $\lambda_1, \lambda_2 \in wH$. Thus, for any $1\leq i_1\leq \cdots \leq i_{k-1}\leq n-1$,
\[\prod\limits_{j=1}^{k-1}\frac{b-a_{i_j}}{c-a_{i_j}}=\prod\limits_{j=1}^{k-1}\mu_{i_j}\in H ~ \text{or}~ w^{2}H.\]
Obviously, $\lambda_1,\lambda_2 \notin H \cup w^2H$, then \[\prod\limits_{j=1}^{k-1}\frac{b-a_{i_j}}{c-a_{i_j}}\neq \lambda_1, \prod\limits_{j=1}^{k-1}\frac{b-a_{i_j}}{c-a_{i_j}}\neq \lambda_2.\]
 Thus, by Lemmas \ref{one column criterion} and \ref{two column criterion},  $C_1(b,c)$, $C_1(b,c,\infty)$, $C_2(b,c)$, and $C_2(b,c,\infty)$ are MDS codes. Similar to the proof of Corollary 1, since $\mu_i$ runs through all elements of $\{w^2\}\cup (H \setminus \{1\})$, the code length of $C_2(b,c,\infty)$ is $\left\lvert \{w^2\}\cup (H \setminus \{1\})\right\rvert+3=\frac{q-1}{3}+3=\frac{q+8}{3} $. Therefore, we obtain that the code-lengths of $C_1(b,c)$, $C_1(b,c,\infty)$, $C_2(b,c)$, and $C_2(b,c,\infty)$ are at most $\frac{q+2}{3}$,  $\frac{q+5}{3}$, $\frac{q+5}{3}$, and $\frac{q+8}{3}$, respectively.
   \end{itemize}
   \end{proof}

For large odd prime power $q$, the code length of systematically constructed $q$-ary TGRS codes in previous studies was bounded by $\frac{q+1}{2}$. For some cases with small parameters, examples with code lengths exceeding $\frac{q+1}{2}$ can be found through Magma computations(\!\!\cite{GO2024}). Different from conventional TGRS codes, our MDS codes can achieve maximal length of $\frac{q+3}{2}$, and we can precisely determine the dimension of the Schur square codes for these MDS codes. More importantly, the construction method proposed in this paper has clear and intuitive characteristics. It shows greater flexibility in choosing $b,c,H, \lambda_1$ and $\lambda_2$. This characteristic gives our method significant advantage in practical applications. Based on the above two results, we now present some examples.

\begin{example}
   For $q=29$, let \[H = \{1, 4, 5, 6, 7, 9, 13, 16, 19, 20, 22, 23, 24, 25\}\] be the subgroup of squares in $\mathbb{F}_{29}^*$.
  Let $ b = 12 $ and $ c = 7 $. Substitute into the formula
$$
a_i = \frac{b - \mu_i c}{1 - \mu_i},
$$
where $ \mu_i \in H $ and $ \mu_i \neq 1 $.
We obtain the evaluation point set \[ T = \{3, 4, 6, 8, 9, 10, 11, 13, 15, 16, 22, 24, 26\} .\]Let $ \lambda_1 = 15 $ and $ \lambda_2 = 21 $. a generator matrix of the linear code $ C_2(12,7,\infty) $ is obtained as follows:
\[G_2(12,7,\infty) =\left(
\begin{array}{cccccccccccccccc}
1 & 1 & 1 & 1 & 1 & 1 & 1 & 1 & 1 & 1 & 1 & 1 & 1 & 15 & 9 & 0 \\
3 & 4 & 6 & 8 & 9 & 10 & 11 & 13 & 15 & 16 & 22 & 24 & 26 & 23 & 10 & 0 \\
9 & 16 & 7 & 6 & 23 & 13 & 5 & 24 & 22 & 24 & 20 & 25 & 9 & 18 & 14 & 0 \\
27 & 6 & 13 & 19 & 4 & 14 & 26 & 22 & 11 & 7 & 5 & 20 & 2 & 5 & 6 & 0 \\
23 & 24 & 20 & 7 & 7 & 24 & 25 & 25 & 20 & 25 & 23 & 16 & 23 & 4 & 11 & 0 \\
11 & 9 & 4 & 27 & 5 & 8 & 14 & 6 & 10 & 23 & 13 & 7 & 18 & 4 & 24 & 0 \\
4 & 7 & 24 & 13 & 16 & 22 & 9 & 20 & 5 & 20 & 25 & 23 & 4 & 1 & 25 & 1
\end{array}
\right).
\]
By \textsc{Magma}, $ C_2(12,7,\infty) $ is an MDS code with parameters $[16, 7, 10]$. Its Schur square code has parameters $[16, 14, 2]$, which implies that $ C_2(12,7,\infty) $ is not equivalent to any (\textit{extended})  GRS code.

\end{example}

\begin{example}
   For $q=3^3$, let $w$ be a primitive element of $\mathbb{F}_{3^3}$, and let
   \[
   H= \left\{ 1,\, w^{22},\, w^{12},\, w^{2},\, w^{24},\, w^{14},\, w^{4},\, w^{16},\, w^{6},\, w^{18},\, w^{8},\, w^{20},\, w^{10} \right\}\]
   be the subgroup of squares in $\mathbb{F}_{3^3}^*$.
    Let $ b = w^{7} $ and $ c = w^{11} $. Then, by the same method, we derive the evaluation point set
\[
\begin{split}
T= \left\{ w^{22},\, w^{12},\, w^{2},\, w^{24},\, w^{14},\, w^{4},\, w^{16},\, w^{6},\, w^{18},\, w^{8},\, w^{20},\, w^{10} \right\}.
\end{split}
\]
Let $ \lambda_1 = w^{15} $ and $ \lambda_2 = w^{21} $. A generator matrix of the linear code $ C_2(w^7,w^{11},\infty) $ is obtained as follows:

\[
G_2(w^7,w^{11},\infty) =
\left(
\begin{array}{ccccccccccccccc}
1 & 1 & 1 & 1 & 1 & 1 & 1 & 1 & 1 & 1 & 1 & 1 & w^{21} & w^{15} & 0 \\
0 & 1 & w^{5} & w^{22} & w^{15} & w^{25} & w^{10} & w^{8} & w^{19} & w^{3} & 2 & w^{18} & w^{18} & w^{9} & 0 \\
0 & 1 & w^{10} & w^{18} & w^{4} & w^{24} & w^{20} & w^{16} & w^{12} & w^{6} & 1 & w^{10} & w^{20} & w^{10} & 0 \\
0 & 1 & w^{15} & w^{14} & w^{19} & w^{23} & w^{4} & w^{24}  & w^{5} & w^{9} & 2 & w^{2} & w^{11} & 1 & 0 \\
0 & 1 & w^{20} & w^{10} & w^{8} & w^{22} & w^{14} & w^{6}  & w^{24} & w^{12} & 1 & w^{20} & w^{9} & w^{21} & 0 \\
0 & 1 & w^{25} & w^{6} & w^{23} & w^{21} & w^{24} & w^{14} & w^{17} & w^{15} & 2 & w^{12} & w^{23} & w^{4} & 0 \\
0 & 1 & w^{4} & w^{2} & w^{12} & w^{20} & w^{8} & w^{22} & w^{10} & w^{18} & 1 & w^{4} & w^{3} & w & 1
\end{array}
\right).
\]
By \textsc{Magma}, $ C_2(w^7,w^{11},\infty) $ is an MDS code with parameters $[15, 7, 9]$. Its Schur square code has parameters $[15, 14, 1]$, which implies that $ C_2(w^7,w^{11},\infty) $ is not equivalent to any (\textit{extended}) GRS code.
\end{example}

\begin{example}
  For $q=2^6$, let $w$ be a primitive element of  $\mathbb{F}_{2^6}$, and let
   \[
\begin{split}
H := \left\{ 1,\, w^{30},\, w^{60},\, w^{3},\, w^{33},\, w^{6},\, w^{36},\, w^{9},\, w^{39},\, w^{12},\, w^{42},\, w^{15},\, w^{45},\, w^{18},\, w^{48},\, w^{21},\, w^{51},\, w^{24},\, w^{54},\, w^{27},\, w^{57} \right\}
\end{split}
\] be the subgroup consisting of all cubics in $\mathbb{F}_{2^6}$.
    Let $ b = w^{10} $ and $ c = w^{21} $. Subsequently, we take
\[
H^{\prime} := \left\{ w^{30},\, w^{60},\, w^{3},\, w^{33},\, w^{6},\, w^{36},\, w^{9},\, w^{39},\, w^{12} \right\},
\]
and add  $w^2 \in w^{2}H$. By computation, we obtain its valuation  set
\[
T:=\left\{w^{27}, w^{29},\, w^{7},\, w^{13},\, w,\, w^{23},\, w^{26},\, w^{40},\, w^{46},\, w^{32} \right\}.
\]
Let $ \lambda_1 = w^{13} $ and $ \lambda_2 = w^{25} $ with $\lambda_1, \lambda_2 \in wH$. A generator matrix of the linear code $ C_2(w^{10},w^{21},\infty) $ is obtained as follows:
\[
G_2(w^{10},w^{21},\infty) =
\left(
\begin{array}{ccccccccccccc}
1 & 1 & 1 & 1 & 1 & 1 & 1 & 1 & 1 & 1 & w^3 & w^{58} & 0 \\
w^{27} & w^{29} & w^7 & w^{13} & w & w^{23} & w^{26} & w^{40} &w^{46} & w^{32} & w^{51} & w^{55} & 0 \\
w^{54} & w^{58} & w^{14} & w^{26} & w^2 & w^{46} & w^{52} & w^{17} & w^{29} & w & w^{24} & w^{18} & 0 \\
w^{18} & w^{24} & w^{21} & w^{39} & w^3 & w^6 & w^{15} & w^{57} & w^{12} & w^{33} & w^{52} & w^{55} & 0 \\
w^{45} & w^{53} & w^{28} & w^{52} & w^4 & w^{29} & w^{41} & w^{34} & w^{58} & w^2 & w^{60} & w^3 & 1
\end{array}
\right).
\]
By \textsc{Magma}, $ C_2(w^{10},w^{21},\infty) $ is an MDS code with parameters $[13, 5, 9]$. Its Schur square code has parameters $[13, 10, 2]$, which implies that $ C_2(w^{10},w^{21},\infty) $ is not equivalent to any (extended)  GRS code.
\end{example}

\section{Dual Codes of Column TRS Codes}

In this section, we present the dual code of $C_2(b,c)$ and $C_2(b,c,\infty)$. As is well known, a parity-check matrix of $C_2(b,c)$ also serves as a generator matrix of its dual code. For $1\leq l \leq n-1$, define
\[\Delta_l(x) = {\prod \limits_{\substack{i=1 \\ i\neq l}}^{k} (a_i - x)}.\]
We begin by establishing the following result:

\begin{theorem}\label{Dual column TRS codes}
    \begin{itemize}
        \item [\rm (1)]A parity-check martix of $C_2(b,c)$ is
\[ H_2(b,c)=
\left(
\begin{array}{ccccccc}
\frac{\Delta_1(a_{k+1})}{\Delta_1(a_{1})} & \cdots & \frac{\Delta_k(a_{k+1})}{\Delta_k(a_{k})} & -1  & \cdots & 0 & 0 \\
\frac{\Delta_1(a_{k+2})}{\Delta_1(a_{1})} & \cdots & \frac{\Delta_1(a_{k+2})}{\Delta_k(a_{k})} & 0  & \cdots & 0 & 0 \\
\vdots & \vdots & \vdots & \vdots  & \vdots & \vdots & \vdots \\
\frac{\Delta_1(b)-\lambda_1\Delta_1(c)}{\Delta_1(a_{1})} & \cdots & \frac{\Delta_k(b)-\lambda_1\Delta_k(c)}{\Delta_k(a_{k})} & 0  & \cdots & -1 & 0 \\
\frac{\Delta_1(b)-\lambda_2\Delta_1(c)}{\Delta_1(a_{1})} & \cdots & \frac{\Delta_k(b)-\lambda_2\Delta_k(c)}{\Delta_k(a_{k})} & 0  & \cdots & 0 & -1
\end{array}
\right).
\]
        \item [\rm (2)]For $k\leq \frac{n}{2}$, a parity-check martix of $C_2(b,c,\infty)$ is
\[ H_2(b,c,\infty)=
\left(
\begin{array}{cccccccc}
\frac{\Delta_1(a_{k+1})}{\Delta_1(a_{1})} & \cdots & \frac{\Delta_k(a_{k+1})}{\Delta_k(a_{k})} & -1 & \cdots & 0 & 0  &0\\
\frac{\Delta_1(a_{k+2})}{\Delta_1(a_{1})} & \cdots & \frac{\Delta_k(a_{k+2})}{\Delta_k(a_{k})} & 0 & \cdots & 0 & 0 &0\\
\vdots & \vdots & \vdots & \vdots & \vdots & \vdots & \vdots \\
\frac{\Delta_1(b)-\lambda_1\Delta_1(c)}{\Delta_1(a_{1})} & \cdots & \frac{\Delta_k(b)-\lambda_1\Delta_k(c)}{\Delta_k(a_{k})} & 0 & \cdots & -1 & 0 &0\\
\frac{\Delta_1(b)-\lambda_2\Delta_1(c)}{\Delta_1(a_{1})} & \cdots & \frac{\Delta_k(b)-\lambda_2\Delta_k(c)}{\Delta_k(a_{k})} & 0 & \cdots & 0 & -1 &0\\
\frac{(-1)^{k+1}}{\Delta_1(a_{1})} & \cdots & \frac{(-1)^{k+1}}{\Delta_k(a_{k})} & 0 & \cdots & 0 &0& -1
\end{array}
\right).
\]
    \end{itemize}
\end{theorem}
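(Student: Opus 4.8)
The plan is to verify directly that $H_2(b,c)$ annihilates every codeword of $C_2(b,c)$, and that its rank is $n-k$, which together identify it as a parity-check matrix. Write $m = n-k$, so $C_2(b,c)$ has $n$ coordinates: the $k$ "polynomial" coordinates $a_1,\dots,a_k$, then $a_{k+1},\dots,a_{n-2}$, and finally the two twisted coordinates associated with $(b,\lambda_1)$ and $(b,\lambda_2)$. The key observation is the standard interpolation identity: for a polynomial $f$ of degree $\le k-1$, one has
\[
f(\beta) = \sum_{i=1}^{k} f(a_i)\,\frac{\Delta_i(\beta)}{\Delta_i(a_i)},
\]
since the right-hand side is the Lagrange interpolation of $f$ through the points $a_1,\dots,a_k$ evaluated at $\beta$. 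First I would apply this with $\beta = a_{k+1},\dots,a_{n-2}$: the $r$-th row of $H_2(b,c)$ (for the "ordinary" twisted coordinates, which here are the plain extra evaluation points) has entries $\Delta_i(a_{k+r})/\Delta_i(a_i)$ in the first $k$ positions and $-1$ in position $k+r$, and the identity shows the inner product with any codeword $(f(a_1),\dots,f(a_{n-2}),\dots)$ vanishes because it equals $f(a_{k+r}) - f(a_{k+r}) = 0$.

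Next I would treat the last two rows, which encode the twisted coordinates. A generic codeword of $C_2(b,c)$ in the $(b,\lambda_s)$-coordinate has the form $f(b) - \lambda_s f(c)$ for some $f$ of degree $\le k-1$ (from the structure of $G_2(b,c)$: the last two columns are $b^j - \lambda_s c^j$). Applying the interpolation identity to $f(b)$ and to $f(c)$ separately and subtracting gives
\[
f(b) - \lambda_s f(c) = \sum_{i=1}^{k} f(a_i)\,\frac{\Delta_i(b) - \lambda_s \Delta_i(c)}{\Delta_i(a_i)},
\]
which is precisely the statement that the last-but-$(2-s+1)$ row of $H_2(b,c)$, with entries $(\Delta_i(b) - \lambda_s\Delta_i(c))/\Delta_i(a_i)$ followed by a $-1$ in the corresponding twisted position, is orthogonal to every codeword. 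Hence $H_2(b,c)\, G_2(b,c)^{\mathsf T} = 0$. For the rank: the rightmost $m \times m$ block of $H_2(b,c)$ is $-I_m$ (the identity on the $m$ non-polynomial coordinates), so $\mathrm{rank}\,H_2(b,c) = m = n-k$, and since $C_2(b,c)$ has dimension $k$ and $H_2(b,c)$ has $n-k$ independent rows all orthogonal to it, $H_2(b,c)$ generates the dual code.

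For part (2), the same argument applies with one extra coordinate and one extra row. The extended code $C_2(b,c,\infty)$ appends the coefficient $f_{k-1}$ of $x^{k-1}$; the new row of $H_2(b,c,\infty)$ has entries $(-1)^{k+1}/\Delta_i(a_i)$ in the first $k$ positions and $-1$ in the last. This works because $f_{k-1}$, the leading coefficient of the degree-$\le k-1$ interpolating polynomial through $a_1,\dots,a_k$, equals $\sum_{i=1}^k f(a_i)/\prod_{j\ne i}(a_i - a_j) = \sum_{i=1}^k f(a_i)\,(-1)^{k-1}/\Delta_i(a_i)$, since $\prod_{j\ne i}(a_i-a_j) = (-1)^{k-1}\Delta_i(a_i)$. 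One checks the other new entries: in the first $n-k$ rows, position $n+1$ (the $\infty$-coordinate) gets a $0$ because those codewords' interpolation identities do not involve the leading coefficient. The main obstacle I anticipate is purely bookkeeping: getting the signs right in the relation $\prod_{j\ne i}(a_i-a_j) = (-1)^{k-1}\Delta_i(a_i)$ and making sure the $\lambda_1$ and $\lambda_2$ rows are slotted into the correct twisted positions with the correct $-1$ placement, especially since the displayed matrices in the statement have an apparent typo ($\Delta_1(a_{k+2})$ in the second row, which should read $\Delta_k(a_{k+2})$). Once the indexing is pinned down, each orthogonality relation is a one-line consequence of Lagrange interpolation.
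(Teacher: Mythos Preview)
Your proof is correct and takes a genuinely different route from the paper's. The paper writes down the homogeneous system $vG_2(b,c)^{\mathsf T}=0$, observes that the leading $k\times k$ Vandermonde block is nonsingular, assigns the free variables $(v_{k+1},\dots,v_{n+1})$ the negative standard basis vectors one at a time, and then invokes Cramer's rule to solve for $v_1,\dots,v_k$; the resulting determinant ratios are precisely the $\Delta_i$-quotients in the statement. Your approach bypasses the linear system entirely: you recognise each row of $H_2(b,c)$ as an instance of the Lagrange interpolation identity (or, for the twisted columns, a linear combination of two such identities, and for the $\infty$-column, the leading-coefficient formula), so orthogonality to every codeword is immediate, and the rank follows from the $-I_{n-k}$ block on the right.

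What each buys: the paper's method is constructive---it is how one would \emph{discover} $H_2(b,c)$ without knowing it in advance---but the actual evaluation of the Cramer determinants is left implicit. Your method is a clean verification that makes the coding-theoretic meaning of each row transparent (each row enforces one interpolation constraint), handles the $\lambda_s$-twisted and $\infty$ coordinates uniformly, and requires no determinant computation at all. Your sign check $\prod_{j\ne i}(a_i-a_j)=(-1)^{k-1}\Delta_i(a_i)$ for the last row of $H_2(b,c,\infty)$ is exactly what is needed, and you are right that the displayed $\Delta_1(a_{k+2})$ in the second row of the statement is a typo for $\Delta_k(a_{k+2})$.
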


\begin{proof}
\begin{itemize}
    \item [\rm (1)]
    Recall that a generator matrix $C_2(b,c)$ is
\[
G_2(b,c) =
\left(
      \begin{array}{ccccc}
        1  & \cdots & 1 & 1-\lambda_1 & 1-\lambda_2 \\
        a_1  & \cdots & a_{n-1}  & b-\lambda_1 c  & b-\lambda_2 c \\
        a_1^2  & \cdots & a_{n-1}^2 & b^2-\lambda_1 c^2  & b^2-\lambda_2 c^2\\
        \vdots  & \vdots & \vdots & \vdots &\vdots \\
        a_1^{k-1}  & \cdots & a_{n-1}^{k-1} & b^{k-1}-\lambda_1 c^{k-1} & b^{k-1}-\lambda_2  c^{k-1}\\
      \end{array}
    \right).
\]
    Suppose, $v=(v_1,v_2,\cdots,v_{n+1})$ satisfies the equation $vG_2^T(b,c)=0$. We obtain
   $$
\begin{cases}
v_1  + \cdots + v_{n-1} + (1-\lambda_1)v_n + (1-\lambda_2)v_{n+1} = 0, \\[5pt]
a_1 v_1  + \cdots + a_{n-1} v_{n-1} + (b - \lambda_1 c)v_n + (b - \lambda_2 c)v_{n+1} = 0, \\[5pt]
~\cdots \quad \cdots \quad \cdots \\[5pt]
a_1^{k-1} v_1  + \cdots + a_{n-1}^{k-1} v_{n-1} + (b^{k-1} - \lambda_1 c^{k-1})v_n + (b^{k-1} - \lambda_2 c^{k-1})v_{n+1} = 0.
\end{cases}
$$
    The coefficient matrix formed by the first $k$ unknowns is of full rank. Therefore, given a set of free variables as follows
    $$
    \begin{cases}
        (v_{k+1},v_{k+2},\cdots,v_{n+1}) = (-1,0,\cdots,0), \\[5pt]
        (v_{k+1},v_{k+2},\cdots,v_{n+1}) = (0,-1,\cdots,0), \\[5pt]
        ~\cdots \quad \cdots \quad \cdots \\[5pt]
        (v_{k+1},v_{k+2},\cdots,v_{n+1}) = (0,0,\cdots,-1).
    \end{cases}
    $$
    It follows that the solution space $V={\rm Span}(\boldsymbol{x}_{k+1},\cdots,\boldsymbol{x}_{n+1})$ of equations can be obtained using Cramer's rule, where
 {\small       $$
    \begin{cases}
        \boldsymbol{x}_{k+1} = (\frac{\Delta_1(a_{k+1})}{\Delta_1(a_{1})} , \cdots , \frac{\Delta_k(a_{k+1})}{\Delta_k(a_{k})} , -1  , \cdots  ,0 , 0), \\[5pt]
        ~\cdots \quad \cdots \quad \cdots \\[5pt]
        \boldsymbol{x}_{n} = (\frac{\Delta_1(b)-\lambda_1\Delta_1(c)}{\Delta_1(a_{1})}  , \cdots , \frac{\Delta_k(b)-\lambda_1\Delta_k(c)}{\Delta_k(a_{k})} , 0 , \cdots , -1 , 0), \\[5pt]
        \boldsymbol{x}_{n+1} = (\frac{\Delta_1(b)-\lambda_2\Delta_1(c)}{\Delta_1(a_{1})} ,  \cdots , \frac{\Delta_k(b)-\lambda_2\Delta_k(c)}{\Delta_k(a_{k})} , 0 , \cdots , 0 , -1).
    \end{cases}
    $$}
Therefore, we obtain the parity-check matrix $H_2(b,c)$ of $C_2(b,c)$, which is exactly a generator matrix of its dual code.
    \item [\rm (2)]
    Similarly, consider the equation
    $$
    \begin{cases}
    v_1  + \cdots + v_{n-1} + (1-\lambda_1)v_n + (1-\lambda_2)v_{n+1} = 0, \\[5pt]
    a_1 v_1  + \cdots + a_{n-1} v_{n-1} + (b - \lambda_1 c)v_n + (b - \lambda_2 c)v_{n+1} = 0, \\[5pt]
    ~\cdots \quad \cdots \quad \cdots \\[5pt]
    a_1^{k-2} v_1  + \cdots + a_{n-1}^{k-2} v_{n-1} + (b^{k-2} - \lambda_1 c^{k-2})v_n +  (b^{k-2} - \lambda_2 c^{k-2})v_{n+1} = 0, \\[5pt]
    a_1^{k-1} v_1  + \cdots + a_{n-1}^{k-1} v_{n-1} + (b^{k-1} - \lambda_1 c^{k-1})v_n +  (b^{k-1} - \lambda_2 c^{k-1})v_{n+1}  + v_{n+2}= 0.
    \end{cases}
    $$
    The coefficient matrix formed by the first $k$ unknowns is of full rank. Therefore, given a set of free variables as follows
    $$
    \begin{cases}
        (v_{k+1},v_{k+2},\cdots,v_{n+1}) = (-1,0,\cdots,0), \\[5pt]
        (v_{k+1},v_{k+2},\cdots,v_{n+1}) = (0,-1,\cdots,0), \\[5pt]
        ~\cdots \quad \cdots \quad \cdots \\[5pt]
        (v_{k+1},v_{k+2},\cdots,v_{n+1}) = (0,0,\cdots,-1).
    \end{cases}
    $$
    It follows that the solution space $V={\rm Span}(\boldsymbol{x}_{k+1},\cdots,\boldsymbol{x}_{n+1},\boldsymbol{x}_{n+2})$ of equations can be obtained using Cramer's rule, where
      $$
    \begin{cases}
        \boldsymbol{x}_{k+1} = \left(\frac{\Delta_1(a_{k+1})}{\Delta_1(a_{1})}  , \cdots , \frac{\Delta_k(a_{k+1})}{\Delta_k(a_{k})} , -1  , \cdots,0  ,0 , 0\right), \\[5pt]
        ~\cdots \quad \cdots \quad \cdots \\[5pt]
        \boldsymbol{x}_{n} = \left(\frac{\Delta_1(b)-\lambda_1\Delta_1(c)}{\Delta_1(a_{1})}  , \cdots , \frac{\Delta_k(b)-\lambda_1\Delta_k(c)}{\Delta_k(a_{k})} , 0 , \cdots , -1 , 0,0\right), \\[5pt]
        \boldsymbol{x}_{n+1} = \left(\frac{\Delta_1(b)-\lambda_2\Delta_1(c)}{\Delta_1(a_{1})}  , \cdots , \frac{\Delta_k(b)-\lambda_2\Delta_k(c)}{\Delta_k(a_{k})} , 0 , \cdots , 0 , -1,0\right),\\[5pt]
        \boldsymbol{x}_{n+2} = \left(\frac{(-1)^{k+1}}{\Delta_1(a_{1})} ,  \cdots , \frac{(-1)^{k+1}}{\Delta_k(a_{k})} , 0  , \cdots , 0 , 0 , -1\right).
    \end{cases}
    $$
   Hence, we obtain the parity-check matrix $H_2(b,c,\infty)$ of $C_2(b,c,\infty)$, which is exactly a generator matrix of its dual code.
\end{itemize}

\end{proof}

According to Theorem \ref{Dual column TRS codes}, we can calculate parity-check matrices for $C_2(12,7,\infty)$, $C_2(w^7,w^{11},\infty)$, and $C_2(w^{10},w^{21},\infty)$ in {\bf Examples} 1-3 as follows:
\[
\begin{array}{ll}
H_2(12,7,\infty) =\left(
\begin{array}{cccccccccccccccc}
16 & 24 & 15 & 24 & 18 & 12 & 8 & -1 & 0 & 0 & 0 & 0 & 0 & 0 & 0 & 0 \\
2 & 4 & 9 & 15 & 25 & 15 & 18 & 0 & -1 & 0 & 0 & 0 & 0 & 0 & 0 & 0 \\
11 & 15 & 16 & 27 & 21 & 5 & 22 & 0 & 0 & -1 & 0 & 0 & 0 & 0 & 0 & 0 \\
4 & 26 & 26 & 2 & 12 & 21 & 26 & 0 & 0 & 0 & -1 & 0 & 0 & 0 & 0 & 0 \\
17 & 3 & 18 & 19 & 11 & 9 & 11 & 0 & 0 & 0 & 0 & -1 & 0 & 0 & 0 & 0 \\
19 & 5 & 21 & 17 & 5 & 9 & 12 & 0 & 0 & 0 & 0 & 0 & -1 & 0 & 0 & 0 \\
17 & 15 & 8 & 6 & 8 & 1 & 18 & 0 & 0 & 0 & 0 & 0 & 0 & -1 & 0 & 0 \\
2 & 17 & 23 & 22 & 7 & 20 & 5 & 0 & 0 & 0 & 0 & 0 & 0 & 0 & -1 & 0 \\
24 & 15 & 23 & 18 & 5 & 17 & 14 & 0 & 0 & 0 & 0 & 0 & 0 & 0 & 0 & -1
\end{array}
\right), &\\[4mm]
H_2(w^7,w^{11},\infty) = \left(
\begin{array}{ccccccccccccccc}
w^{25} & w^{2} & w^{24} & w^{15} & w^{21} & w^{25} & w^{17} & -1 & 0 & 0 & 0 & 0 & 0 & 0 & 0 \\
w^{12} & w & w^{9} & w^{15} & w^{10} & w^{7} & w^{11} & 0 & -1 & 0 & 0 & 0 & 0 & 0 & 0 \\
w^{25} & 2 & w^{5} & 2 & w^{18} & w^{14} & w^{11} & 0 & 0 & -1 & 0 & 0 & 0 & 0 & 0 \\
w^{6} & 2 & w^{20} & w^{17} & w^{8} & w^{16} & w^{17} & 0 & 0 & 0 & -1 & 0 & 0 & 0 & 0 \\
1 & w^{8} & w^{18} & w^{5} & w^{21} & 2 & 1 & 0 & 0 & 0 & 0 & -1 & 0 & 0 & 0 \\
w^{21} & w^{8} & w^{17} & w^{16} & w^{8} & w^{3} & w^{22} & 0 & 0 & 0 & 0 & 0 & -1 & 0 & 0 \\
w^{22} & w^{7} & w^{6} & w^{6} & w^{6} & w^{21} & w^{16} & 0 & 0 & 0 & 0 & 0 & 0 & -1 & 0 \\
w & w^{21} & w^{6} & 1 & w^{2} & w^{19} & w^{18} & 0 & 0 & 0 & 0 & 0 & 0 & 0 & -1
\end{array}
\right), &\\[4mm]
H_2(w^{10},w^{21},\infty) =
\left(
\begin{array}{ccccccccccccc}
w^{14} & w^{10} & w^{50} & w^{33} & w^{18} & 1 & 0 & 0 & 0 & 0 & 0 & 0 & 0 \\
w^{56} & w^{23} & w^{19} & w^{30} & w^{9} & 0 & 1 & 0 & 0 & 0 & 0 & 0 & 0 \\
w^{24} & w^{52} & w^{8} & w^{3} & w^{29} & 0 & 0 & 1 & 0 & 0 & 0 & 0 & 0 \\
w^{4} & w^{61} & w^{54} & w^{62} & w^{35} & 0 & 0 & 0 & 1 & 0 & 0 & 0 & 0 \\
w^{52} & w^{54} & w^{43} & w^{19} & w^{42} & 0 & 0 & 0 & 0 & 1 & 0 & 0 & 0 \\
w^{19} & w^{18} & w^{51} & w^{29} & w^{59} & 0 & 0 & 0 & 0 & 0 & 1 & 0 & 0 \\
w^{11} & w^{44} & w^{38} & w^{19} & w^{31} & 0 & 0 & 0 & 0 & 0 & 0 & 1 & 0 \\
w^{13} & 1 & w^{12} & w^{47} & w^{6} & 0 & 0 & 0 & 0 & 0 & 0 & 0 & 1
\end{array}
\right). &
\end{array}
\]
By \textsc{Magma}, parity-check matrices $ H_2(12,7,\infty) $, $ H_2(w^7,w^{11},\infty) $, and $ H_2(w^{10},w^{21},\infty) $ correspond to the codes $ C_2(12,7,\infty) $, $ C_2(w^7,w^{11},\infty) $, and $ C_2(w^{10},w^{21},\infty) $ in {\bf Examples} 1-3 respectively.
\section{CONCLUSION}

In this paper, we established  some conditions for column TRS codes to be MDS. We proved that the dimension of their Schur square codes  is exactly $2k$, demonstrating that they are inequivalent to GRS codes and conventional TGRS codes, where the latter had been shown to be inequivalent to GRS codes. For large odd prime powers $q$, unlike the systematically constructed TGRS codes whose lengths were previously bounded by $\frac{q+1}{2}$, these MDS codes in our constructions have lengths up to $\frac{q+3}{2}$. Precisely, our construction was presented in five steps. Another advantage of this method is its flexibility in choosing evaluation sets and parameters $\lambda_1, \lambda_2$.  In this sense, this approach provided a new method for MDS code design. Finally, we presented the dual codes of these column TGRS codes.

For future work, we aim to investigate whether adding more columns could construct longer MDS column TGRS codes that remain inequivalent to GRS codes.


%

\end{document}